\def\BibTeX{{\rm B\kern-.05em{\sc i\kern-.025em b}\kern-.08em
    T\kern-.1667em\lower.7ex\hbox{E}\kern-.125emX}}
\newtheorem{theorem}{Theorem}
\newtheorem{lemma}{Lemma}
\newcommand{\msf}{\mathsf}
\newcommand{\lp}{\left(}
\newcommand{\rp}{\right)}
\newcommand{\ra}{\mathrm{RIS}_1}
\newcommand{\rb}{\mathrm{RIS}_2}
\newcommand{\dn}{\delta_{N}}
\newcommand{\db}{\dn}
\newcommand{\ds}{\delta_{S}}
\newcommand{\dm}{\ds}
\newcommand{\dd}{\delta_{D}}
\newcommand{\da}{\dd}
\newcommand{\dbar}{\bar{\delta}}
\newcommand{\si}{\mathcal{A}^n}
\begin{document}

%\title{Movable Antenna-Aided Broadcast Packet Channels: Optimal Position and Shannon Capacity}

%\title{AssignIT:  Approach to RIS-User Assignment in Broadcast Channels}

%\title{AssignIT: Dynamic RIS-User Assignment in Double RIS-Aided Broadcast Packet Networks}

\title{Capacity-Maximizing Dynamic User Association in Double RIS-Aided Broadcast Networks}

\author{
Alireza~Vahid,~\IEEEmembership{Senior~Member,~IEEE}
%,~and~Haijian~Sun,~\IEEEmembership{Member,~IEEE}
\thanks{Alireza Vahid is with the Electrical and Microelectronic Engineering Department at Rochester Institute of Technology, Rochester, NY 14623, USA. Email: {alireza.vahid@rit.edu}.} 
%\thanks{Haijian Sun is with  the School of Electrical and Computer Engineering, University of Georgia, Athens, GA 30602, USA. Email: {hsun@uga.edu}.}
%\thanks{Shih-Chun Lin is with the Department of Electrical Engineering, National Taiwan University, Taipei, Taiwan. Emails: {sclin2@ntu.edu.tw}.}
%\thanks{The work of A. Vahid was in part supported by NSF grants CNS-2343959, CNS-2343964, and AST-2348589.}
}

\maketitle

%%%%%%%%%%%%%%%%%%%%%%%%%%%%%%%%%%%%%%%%%%%%%%%%%%%

\begin{abstract}
We introduce an information-theoretic framework to dynamically pair up different reconfigurable intelligent surfaces (RISs) with wireless users with goal of maximizing the fundamental network capacity. We focus on a double RIS-aided broadcast packet network with two users. We show using a dynamic RIS-user association and an opportunistic protocol, the network capacity could be significantly enhanced and superior to other benchmarks with static associations. The results include new outer-bounds on network capacity and their achievability. We discuss the optimal RIS-user association.
\end{abstract}

\begin{IEEEkeywords}
Reconfigurable intelligent surface, broadcast networks, channel capacity, resource allocation, channel morphing.
\end{IEEEkeywords}

%%%%%%%%%%%%%%%%%%%%%%%%%%%%%%%%%%%%%%%%%%%%%%%%%%%

\section{Introduction}
\label{Section:Introduction}

%Role of RIS in wireless and vehicular networks.

In recent years, new wireless technologies such as reconfigurable intelligent surfaces (RISs) have emerged as potential enablers of 6G and future wireless generations~\cite{liu2021reconfigurable}. 
In simple terms, an RIS can be viewed as a set of programmable reflecting elements that can alter the phase and potentially the amplitude of the incoming signals.
The key strengths of an RIS are its ability to manipulate the wireless medium, full-duplex, and low power consumption.
The ability to manipulate the wireless medium is particularly important in sustaining connectivity in higher frequency bands, which are of interest in vehicular communications, where electromagnetic waves suffer from high path-loss and may be easily blocked~\cite{peng2018statistical,DebbahTHz22}.
Collectively, these aspects make RIS an attractive solution in vehicular networks~\cite{al2022reconfigurable,ai2021secure,agrawal2021performance,hoang2023secrecy,mensi2022performance,hoang2024physical}, which are typically in dense urban environments and are characterized by high mobility and high quality of service requirements.
However, RIS-aided networks face various challenges such as the overhead of channel learning and the computational costs and delays associated with the optimization of the RISs.
Moreover, in general, the fundamental Shannon capacity of RIS-aided networks, like most multi-user communication networks, remain unknown.
Further, and from a MAC-layer perspective, there needs to be an RIS-user association in place based on which communication algorithms could be designed and implemented.

% Challenges such as channel learning, optimization, etc. In the MAC-layer you also need to assign these.

There are quite a few results on RIS-aided communications. 
A large body of the literature assumes the channel knowledge is available~\cite{bafghi2022degrees,gan2021user}, which is hard if not impossible to achieve in large-scale and mobile networks.
Other results assume statistical~\cite{gan2021ris} or noisy/local~\cite{nassirpour2023beamforming} knowledge of the channels.
Then, many results focus on eliminating dead zones or maximizing signal-to-noise ratio (SINR) at the users~\cite{kammoun2020asymptotic,al2021ris} instead of considering the Shannon capacity due to the associated complexities. 
We note that SINR maximization may be far from the optimal Shannon capacity~\cite{etkin2008gaussian}.
RIS-user association has also been studied, based on techniques such as the nearest-neighbor, but not in the context of maximizing Shannon capacity~\cite{gan2021user,nassirpour2023beamforming,liu2022joint}.
Finally, the prior RIS-user associations are typically static meaning that they remain unchanged for the course of communications.

% LAST-MILE IS BEST MODELED AS BEC.

To overcome some of these challenges and provide an information-theoretically optimal answer to RIS-user association, we consider a broadcast packet network where the communication channels are modelled as erasure links.
We note that packet networks are well-suited to model ``last-mile communications'' and erasure links capture the frequent interruptions in higher frequency bands.
These observations, make our study relevant to vehicular networks in dense environments with high mobility.
More specifically, we consider a double RIS-aided broadcast packet network with one transmitter and two receiver terminals.
We allow the two RISs to steer their beams toward different users multiple times throughout the communication block, which we refer to as dynamic RIS-user association.
For the two-user double RIS-aided broadcast packet network, our contributions are summarized as below:
\begin{itemize}
    \item We provide an information-theoretic RIS-user association, which allows for dynamic changes during communications. We show the resulting network capacity is superior to static RIS-user associations;
    \item We provide new bounds the define the boundaries of the maximum attainable rates of reliable communication;
    \item We devise a linear protocol that leverages the statistical changes induced by the RIS-user association to maximize the spectrum efficiency and achieve the bounds, thus, characterizing the network capacity;
    \item Instead of relying on exact channel knowledge, we rely on short-length causal ACK/NACK signaling incorporated in most packet networks;
    \item We provide detailed examples to clarify the ideas and establish the superiority of the proposed scheme over static RIS-user associations.
\end{itemize}

The paper is organized as follows. Section~\ref{Section:Setup} describes the network and the required definitions. Section~\ref{Section:Main} presents the new information-theoretic bounds and Section~\ref{Section:Example} includes several examples to compare the findings with other benchmarks and demonstrate the potential gain of our dynamic RIS-user association. Section~\ref{Section:Proofs} is dedicated to the proof of the bounds. Section~\ref{Section:Conclusion} concludes the paper.

%%%%%%%%%%%%%%%%%%%%%%%%%%%%%%%%%%%%%%%%%%%%%%%%%%%

\begin{figure*}[!ht]
\centering
\includegraphics[width = .8\textwidth]{./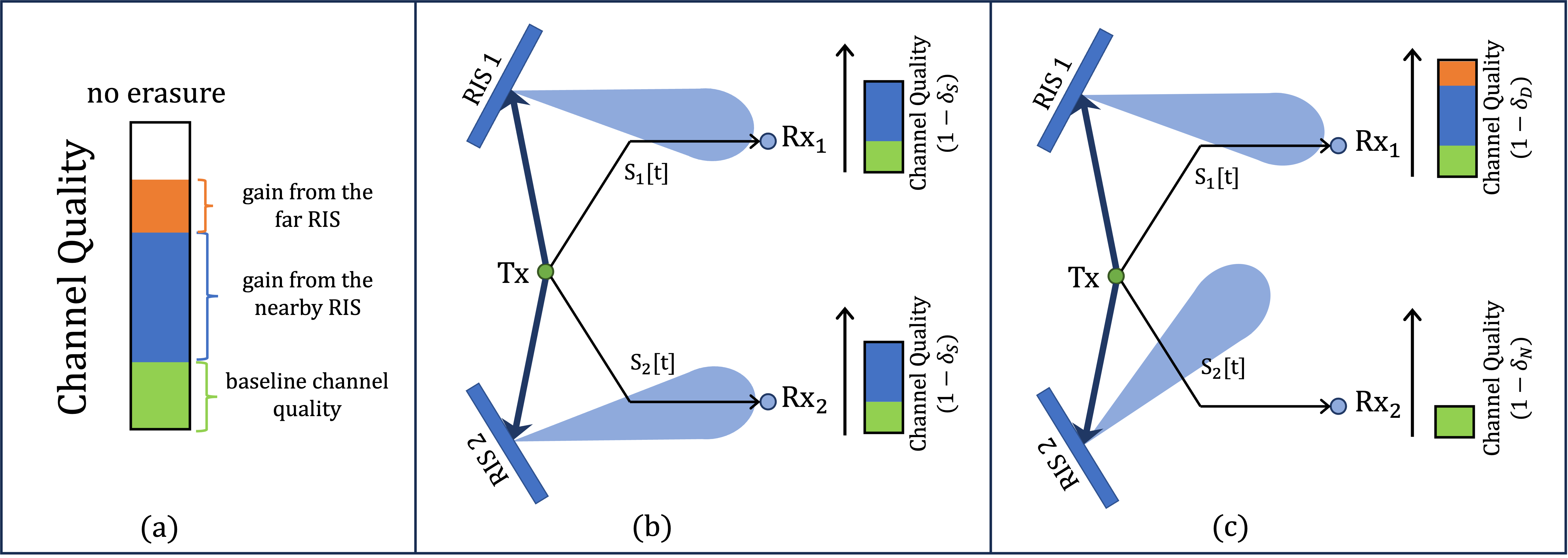}
\caption{The two-user double RIS-aided broadcast packet network: (a) impact of RIS association on channel quality measured in terms of erasure likelihood; (b) $\mathrm{RIS}_i$ assists the reception at $\msf{Rx}_i$, $i=1,2$; (c) both RISs are associated with $\msf{Rx}_1$ resulting in the lowest erasure probability (highest channel quality).\label{Fig:AssignIT}}
\end{figure*}

\section{System Setup}
\label{Section:Setup}

To better understand how RIS-user association affects network capacity and the corresponding transmission protocol, we focus on one of the key building blocks in network information theory and wireless communications, namely the two-user broadcast packet network, where the communication is enhanced by two reconfigurable intelligent surfaces. 
More specifically, the network consists of five distinct nodes: a transmitter node, $\msf{Tx}$, two receiver terminals, $\msf{Rx}_1$ and $\msf{Rx}_2$, and two reconfigurable intelligent surfaces, $\ra$ and $\rb$.

We assume the transmitter, $\msf{Tx}$, has two independent files, $W_1$ and $W_2$, and wishes to reliably deliver them to the two receivers over $n$ channel uses, where each channel use is sufficient for the communication of one packet.
Each message, $W_i$, contains $|W_i|=m_i=nR_i$ data packets in $\mathbb{F}_q$ where $R_i$ is the rate for user $i$ in packets per channel use, $i=1,2$.
In general, the transmitter and receivers may have multiple antennas and the RISs will consists of several reflecting elements. 
These aspects of the communication link from the transmitter to receiver terminal $\msf{Rx}_i$ at time $t$ are modelled in this work by an erasure link $S_i[t] \in \{ 0,1\}$, whose erasure probability is determined by the RIS-user association at time $t$.
In other words, at each time, the transmitted packet is either delivered successfully to a given terminal or erased, and the erasure likelihood is determined by the choice of RIS-user association during the communication of that particular packet. 
Below, we mathematically formulate our model.

\noindent \underline{\bf RIS-user association:} 
An RIS-user association over a communication block of length $n$ is denoted by: 
\begin{align}
    \mathcal{A}^n = \left\{ \mathcal{A}[1], \mathcal{A}[2], \ldots, \mathcal{A}[n] \right\},
\end{align}
where $\mathcal{A}[t]$ is itself a set that defines the RIS-user association at time instant $t = 1,2,\ldots, n$.
More specifically,
\begin{align}
    \label{Eq:Assignt}
    \mathcal{A}[t] = \left\{ (\msf{Rx}_1 \leftarrow \mathcal{B}_1[t] ), (\msf{Rx}_2 \leftarrow \mathcal{B}_2[t] ) \right\},
\end{align}
where $\mathcal{B}_1[t], \mathcal{B}_2[t] \subseteq \{ \ra, \rb \}$, $\mathcal{B}_2[t] = \mathcal{B}_1^\complement[t]$ ($\cdot^\complement$ is the set complement operation).
In other words, \eqref{Eq:Assignt} means that at time instant $t$, the communication link from the transmitter to $\msf{Rx}_1$ is enhanced by the RIS(s) in $\mathcal{B}_1[t]$, and the communication link from the transmitter to $\msf{Rx}_2$ is enhanced by RIS(s) in $\mathcal{B}_2[t] =\mathcal{B}_1^\complement[t]$.
We note that $\mathcal{B}_i[t]$ may be $\emptyset$.
Moreover, while in general an RIS may create multiple beams, from a theoretical standpoint, that can be treated as having additional RISs. 
Thus, here we only focus on the case where each RIS may be associated with at most one user.

\noindent \underline{\bf Channel statistics:} 
As mentioned above, the channel from the transmitter, $\msf{Tx}$, to receiver terminal $\msf{Rx}_i$ is described by the erasure link $S_i[t] \in \{ 0,1\}$, $t = 1,2,\ldots, n,$ and $i =1,2$. 
We assume the erasures occur independently across users and time, but the likelihood is determined by the RIS-user allocation at time instant $t$.
We further assume $\ra$ is positioned closer to $\msf{Rx}_1$, and $\rb$ is positioned closer to $\msf{Rx}_2$. 
Then, if $\mathcal{B}_1[t]$ is equal to:
\begin{enumerate}
    \item $\emptyset$, then $S_1[t] \sim \mathcal{B}(1-\dn)$;
    \item $\{ \ra \}$, then $S_1[t] \sim \mathcal{B}(1-\ds)$;
    \item $\{ \ra, \rb \}$, then $S_1[t] \sim \mathcal{B}(1-\dd)$,
\end{enumerate}
where $0 \leq \dd < \ds < \dn \leq 1$.
We note that due to the position of the RISs, we ignored $\mathcal{B}_1[t] = \{ \rb \}$ as its inclusion would not improve or change the findings. 
A similar set of statements holds for $\msf{Rx}_2$ and $\mathcal{B}_2[t]$.
Figure~\ref{Fig:AssignIT}(a) illustrates an example of the potential improvements in channel quality (\emph{i.e.}, decrease in erasure probability) as the user gets more aid from the RISs.
Intuitively, the gain from the nearby RIS should be larger, which is one reason why we ignored associations such as $\mathcal{B}_1[t] = \{ \rb \}$.
Figures~\ref{Fig:AssignIT}(b) and (c) showcase the network with some possible RIS-user associations and the resulting channel quality at each terminal.

\noindent \underline{\bf Received signals:} At time instant $t$, the files are mapped to channel input $X[t] \in \mathbb{F}_q$, and the corresponding received signals are:
\begin{align}
\label{eq_DL_channel}
Y_1[t] = S_1[t] X[t]~~ \; \mbox{and} \;~~ Y_2[t] = S_2[t] X[t],
\end{align}
where $S_i[t] = 0$ corresponds to the erasure of the transmitted packet at $\msf{Rx}_i$, $i =1,2$. 

\noindent \underline{\bf Other assumptions:} 
We assume the receivers will decode their desired files at the end of the communication block. 
At every time instant $t$, each receiver estimates the channel strength, and will inform other nodes via short-length ACK/NACK signaling whether the packet was erased or not. 
We further assume $\mathcal{A}^n$ is determined a priori and known to all nodes.

\noindent \underline{\bf Channel input:} The constraint imposed at time index $t$ on the encoding function $f_t(.)$ at the transmitter is:
\begin{align}
\label{eq_enc_function}
X[t] = f_t\lp W_1, W_2, S^{t-1}, \si \rp,
\end{align}
where $S^{t-1}=(S_1^{t-1}, S_2^{t-1})$, which captures the causal ACK/NACK signaling from both receiver terminals up to time instant $(t-1)$. 
 
\noindent \underline{\bf Decoding:} 
Receiver $\msf{Rx}_i$, $i=1,2$, will decode its desired file at the end of the communication block using the decoding function $\varphi_{i,n}\left( Y_i^n, S^n, \si \right)$. An error occurs whenever $\widehat{W}_i \neq W_i$. The average probability of error is given by
\begin{align}
\lambda_{i,n} = \mathbb{E}[P(\widehat{W}_i \neq W_i)],
\end{align}
where the expectation is taken with respect to the random choice of the transmitted files.

\noindent \underline{\bf Network capacity:} 
For a given RIS-user association, $\si$, we say that a rate-pair $(R_1,R_2)$ is achievable, if there exist a block encoder at the transmitter and a block decoder at each receiver, such that $\lambda_{i,n}$ goes to zero as the block length $n$ goes to infinity. 
The capacity region, $\mathcal{C}_\mathcal{A}$, is then the closure of the set of the achievable rate-pairs.

%%%%%%%%%%%%%%%%%%%%%%%%%%%%%%%%%%%%%%%%%%%%%%%%%%%
%%%%%%%%%%%%%%%%%%%%%%%%%%%%%%%%%%%%%%%%%%%%%%%%%%%
\section{Information-Theoretic Bounds}
\label{Section:Main}

Fix $0 < \eta_1, \eta_2 < 1/2$ such that $\eta_1 n, \eta_2 n \in \mathbb{Z}^+$. Based on this, we define the following RIS-user association:
\begin{itemize}
    \item $\mathcal{B}_1[t] = \{ \ra, \rb \}$ for $t = 1, \ldots, \eta_1 n$;
    \item $\mathcal{B}_1[t] = \emptyset$ for $t = \eta_1 n + 1, \ldots, (\eta_1+\eta_2) n$;
    \item $\mathcal{B}_1[t] = \{ \ra \}$ for $t = (\eta_1+\eta_2) n + 1, \ldots, n$.
\end{itemize}
We refer to this RIS-user association as the ``dynamic association,'' and if $\eta_1 = \eta_2$, we refer to it as the symmetric dynamic association. The following theorem establishes a set of outer-bounds on the maximum attainable rate of reliable communications for this dynamic association.

\begin{theorem}
\label{THM:RIS-BEC}
For the two-user double RIS-aided broadcast packet network described in Section~\ref{Section:Setup} and for the dynamic RIS-user association defined above, we have:
\begin{equation}
\label{Eq:Theorem}
\mathcal{C}_{\mathcal{A}} \subseteq \mathcal{C}^{\msf{out}} \equiv \left\{ \begin{array}{ll}
0 \leq R_1 + \beta R_2 \leq \beta(1-\dbar_2), & \\
0 \leq \beta R_1 + R_2 \leq \beta(1-\dbar_1), & 
\end{array} \right.
\end{equation}
where 
\begin{equation}
    \label{Eq:beta}
    \beta = \left\{ \begin{array}{ll}
    \frac{1-\da\dn}{1-\dn}, & \msf{~if~} \ds \leq \frac{\dn(1-\dd)}{1-\dn} \\
    1+\ds, & \msf{~if~} \ds > \frac{\dn(1-\dd)}{1-\dn} 
    \end{array} \right.
\end{equation}
and
\begin{align}    
    \label{Eq:dbar}
    \dbar_1 &= \eta_1 \dd + \eta_2 \dn + (1-\eta_1-\eta_2) \ds, \nonumber\\
    \dbar_2 &= \eta_1 \dn + \eta_2 \dd + (1-\eta_1-\eta_2) \ds.
\end{align}
\end{theorem}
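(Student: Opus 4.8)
The plan is to prove each of the two inequalities defining $\mathcal{C}^{\msf{out}}$ as a Fano-type converse and to reduce both to a single entropy comparison between a ``degraded'' receiver and a genie-enhanced one; the two bounds are mirror images under relabeling $\msf{Rx}_1 \leftrightarrow \msf{Rx}_2$, which interchanges phases~1 and~2 and fixes phase~3, so I would establish $R_1+\beta R_2 \le \beta(1-\dbar_2)$ in detail and obtain $\beta R_1 + R_2 \le \beta(1-\dbar_1)$ by symmetry. For the first bound I would start from decodability: Fano's inequality at $\msf{Rx}_2$, together with $W_2 \perp (S^n,\si)$ and the fact that $(W_1,W_2,S^n,\si)$ deterministically fixes $X^n$ hence $Y_2^n$, gives $nR_2 \le H(Y_2^n\mid S^n,\si) - H(Y_2^n\mid W_2,S^n,\si) + n\epsilon_n$ (entropies measured in $q$-ary units). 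Then I would give an enhanced $\msf{Rx}_1$ the genie side information $(Y_2^n,W_2)$ on top of its own $(Y_1^n,S^n,\si)$; since this enhanced receiver still decodes $W_1$, Fano yields $nR_1 \le H(Y_1^n,Y_2^n\mid W_2,S^n,\si)+n\epsilon_n$, the negative term again vanishing because all messages plus $S^n$ determine the outputs.

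The crux is then the single inequality $H(Y_1^n,Y_2^n\mid W_2,S^n,\si) \le \beta\,H(Y_2^n\mid W_2,S^n,\si)$, which together with the elementary $H(Y_2^n\mid S^n,\si)\le n(1-\dbar_2)$ and the two Fano bounds yields $\tfrac{R_1}{\beta}+R_2\le 1-\dbar_2$ as $n\to\infty$, i.e.\ the claim after multiplying by $\beta$. To prove the entropy inequality I would expand both sides by the chain rule using the \emph{matched} conditioning $Q_t=(Y_1^{t-1},Y_2^{t-1},W_2,S^n,\si)$. Because $S^n\subseteq Q_t$ and the current erasures are independent of $X[t]$ given the past, each term factors exactly as $H(Y_1[t],Y_2[t]\mid Q_t)=(1-\delta_1[t]\delta_2[t])\,H(X[t]\mid Q_t)$ and $H(Y_2[t]\mid Q_t)=(1-\delta_2[t])\,H(X[t]\mid Q_t)$, where $\delta_i[t]\in\{\dd,\ds,\dn\}$ is the per-slot erasure probability fixed by the association. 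The two expansions thus share the common nonnegative weight $H(X[t]\mid Q_t)$ and differ only through the ratio $r[t]:=\tfrac{1-\delta_1[t]\delta_2[t]}{1-\delta_2[t]}$, so that $H(Y_1^n,Y_2^n\mid W_2,S^n,\si)\le \big(\max_t r[t]\big)\sum_t (1-\delta_2[t])\,H(X[t]\mid Q_t)$.

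I expect the main obstacle — and the reason $\beta$ is a \emph{maximum} over the three phases rather than an average — to be twofold. First, the conditioning in $\sum_t(1-\delta_2[t])H(X[t]\mid Q_t)$ is $Q_t$, which contains the extraneous $Y_1^{t-1}$ and is therefore finer than the conditioning that reconstructs $H(Y_2^n\mid W_2,S^n,\si)$ by the chain rule; I would discharge this by observing that dropping $Y_1^{t-1}$ only increases each term, giving $\sum_t(1-\delta_2[t])H(X[t]\mid Q_t)\le H(Y_2^n\mid W_2,S^n,\si)$. Second, since the weights $H(X[t]\mid Q_t)$ are not under our control, $r[t]$ must be extracted at its worst case; evaluating $r[t]$ in the three phases gives $\tfrac{1-\dd\dn}{1-\dn}$, $\tfrac{1-\dd\dn}{1-\dd}$, and $1+\ds$, and because $\dd<\dn$ the middle value is dominated by the first, so $\max_t r[t]=\max\big\{\tfrac{1-\dd\dn}{1-\dn},\,1+\ds\big\}$. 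A short comparison $\tfrac{1-\dd\dn}{1-\dn}\gtrless 1+\ds$ reproduces exactly the threshold $\ds \lessgtr \tfrac{\dn(1-\dd)}{1-\dn}$ and hence the definition of $\beta$ in~\eqref{Eq:beta}. Verifying that this same $\beta$ serves both inequalities — which works precisely because swapping users permutes phases~1 and~2 and leaves $\max_t r[t]$ invariant — is the delicate bookkeeping, but it requires no new idea beyond the matched-conditioning identity.
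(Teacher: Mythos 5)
Your proposal is correct and takes essentially the same route as the paper: your key inequality $H(Y_1^n,Y_2^n \mid W_2, S^n, \si) \leq \beta\, H(Y_2^n \mid W_2, S^n, \si)$ is exactly the content of the paper's Lemma~\ref{Lemma:Leakage_BIC_No} (which additionally drops $Y_2^n$ at the end via nonnegativity of entropy, where you instead hand $(Y_2^n,W_2)$ to $\msf{Rx}_1$ as a genie --- the same chain of inequalities in a slightly different order), established by the same per-slot factorization $H(Y_1[t],Y_2[t]\mid Q_t)=(1-\delta_1[t]\delta_2[t])H(X[t]\mid Q_t)$, the same use of the independence of $X[t]$ from the current channel state, and the same worst-case phase-ratio computation yielding $\beta$ in \eqref{Eq:beta}. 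The only cosmetic difference is that you handle both regimes of \eqref{Eq:beta} uniformly through $\max_t r[t]$, whereas the paper proves the case $\ds \leq \dn(1-\dd)/(1-\dn)$ and notes the other case is analogous.
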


To better understand the implications of the bounds presented in Theorem~\ref{THM:RIS-BEC}, to see whether these bounds can be achieved in a practical setting, and to compare the results with static association benchmarks, the following section includes a set of detailed examples. 
The proof for this theorem is presented in Section~\ref{Section:Proofs}.

%%%%%%%%%%%%%%%%%%%%%%%%%%%%%%%%%%%%%%%%%%%%%%%%%%%
%%%%%%%%%%%%%%%%%%%%%%%%%%%%%%%%%%%%%%%%%%%%%%%%%%%
\section{Motivation, Analysis, and Comparisons}
\label{Section:Example}

In this section, we present a sample network, which we will use to illustrate how a dynamic RIS-user association may significantly improve network capacity compared to a static association. 
In particular and as our benchmarks, we consider a network with no RIS, and two static RIS-user associations in the double RIS-aided network.
For this section and for illustration purposes, we fix $\dn = 0.8$ (no RIS aid: weak channel/high erasure), $\ds = 0.5$ (single RIS aid: moderate channel/moderate erasure), and $\dd = 0.3$ (double RIS aid: strong channel/low erasure).

\noindent \underline{\bf No RIS:} If there is no RIS in the network, the channel from the transmitter to each receiver terminal would be governed by a $\mathcal{B}(1-\db$) process. 
In other words, for the entire communication block, the erasure probabilities would be the same.
Such a homogeneous problem with ACK/NACK has been studied before~\cite{GatzianasGeorgiadis_13} and its capacity region is known:
\begin{equation}
\label{Eq:InactiveRIS}
\mathcal{C}_{\msf{noRIS}} \equiv \left\{ \begin{array}{ll}
0 \leq  R_1 + (1+\db) R_2 \leq ( 1 - \db^2), & \\
0 \leq (1+\db) R_1 + R_2 \leq (1-\db^2). & 
\end{array} \right.
\end{equation}

Here, we present a brief outline of the transmission protocol from~\cite{GatzianasGeorgiadis_13} that achieves $\mathcal{C}_{\msf{noRIS}}$, which is shown in Figure~\ref{Fig:PlanComparison} for the parameters given above.
From the figure, it is evident that we only need to describe the achievability of the symmetric sum-rate point and the rest of the region can be achieved by time-sharing.
The protocol has three phases. 
In phase $1$ (phase $2$), packets intended for $\msf{Rx}_1$ ($\msf{Rx}_2$) are transmitted until {\it at least one} receiver terminal obtains each packet. 
Then, the transmitter creates new packets by creating pairwise summations of the packets that are intended for $\msf{Rx}_1$ but available at $\msf{Rx}_2$ with those intended for $\msf{Rx}_2$ but available at $\msf{Rx}_1$.
Note that the resulting new packets would have the exact same size as the original packets but each bit is the results a finite field summation of the corresponding bits in the two original packets.
These new packets are beneficial to both receivers as one part is always known and the other is desired.
The transmitter will deliver these new packets to {\it both} receiver terminals in phase $3$.

\begin{figure}[!t]
\centering
\includegraphics[width = .7\columnwidth]{./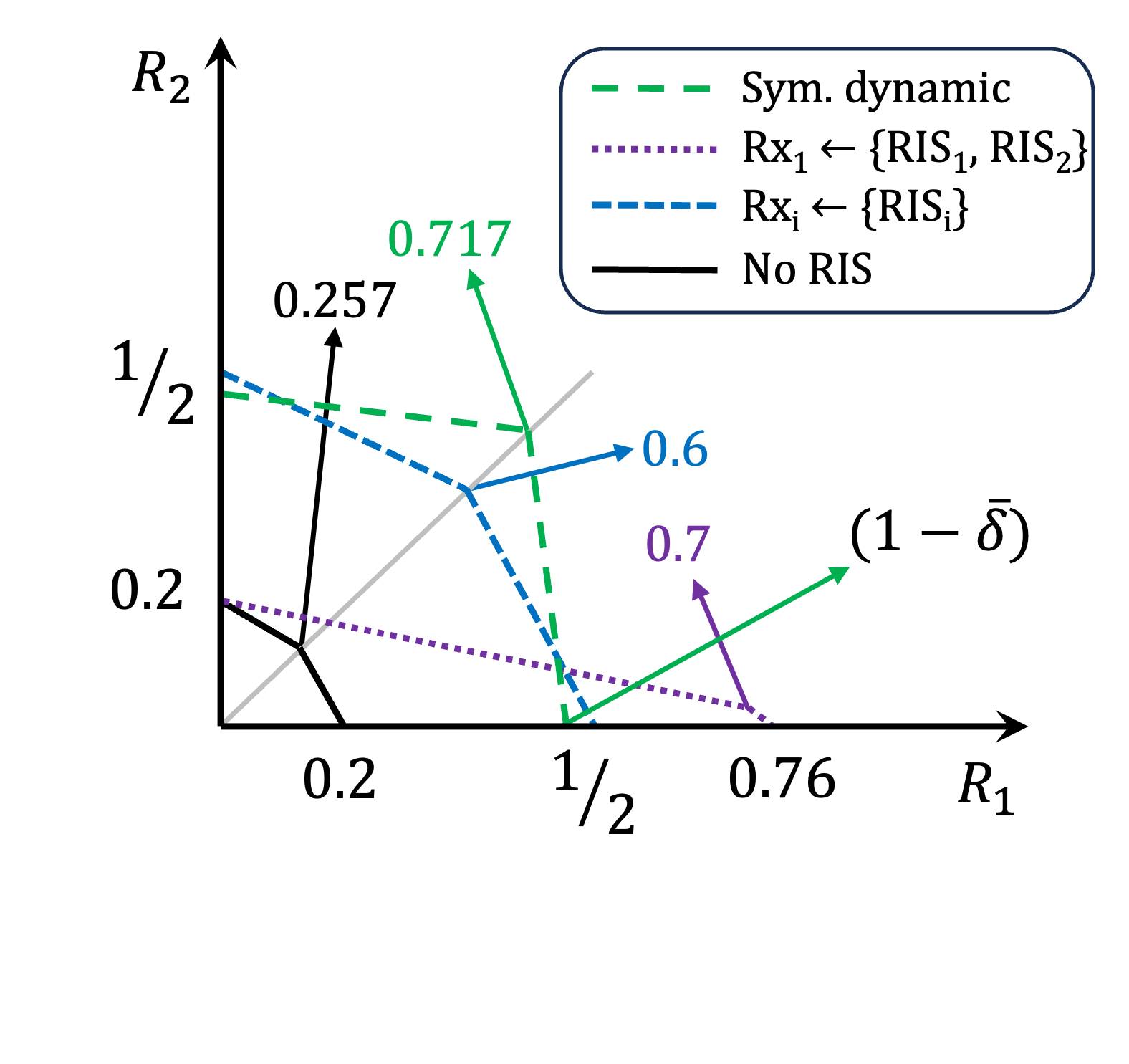}
\vspace{-6mm}
\caption{Comparing the achievable rates for different RIS-user associations.\label{Fig:PlanComparison}}
\end{figure}

\noindent \underline{\bf Single RIS per user:} If for the entire communication block, $\mathrm{RIS}_i$ assists the reception at $\msf{Rx}_i$, $i=1,2$, as in Figure~\ref{Fig:AssignIT}(b), then the channels from the transmitter to the receiver terminals would be governed by $\mathcal{B}(1-\ds$) processes. Similar to the previous case:
\begin{equation}
\label{Eq:Neutral}
\mathcal{C}_{\msf{neutral}} \equiv \left\{ \begin{array}{ll}
0 \leq  R_1 + (1+\dm) R_2 \leq ( 1 - \dm^2), & \\
0 \leq (1+\dm) R_1 + R_2 \leq (1-\dm^2). & 
\end{array} \right.
\end{equation}

\noindent \underline{\bf Double RIS aide for one user:} If instead, both RISs are associated with $\msf{Rx}_1$ for the entire block, \emph{i.e.}, $\mathcal{A}[t] = \left\{ (\msf{Rx}_1 \leftarrow \{ \ra, \rb \} ), (\msf{Rx}_2 \leftarrow \emptyset ) \right\}$ for $t=1,2,\ldots,n$, then we have the following capacity region:
\begin{equation}
\label{Eq:OneUser}
\mathcal{C}_{\msf{user~1}} \equiv \left\{ \begin{array}{ll}
0 \leq R_1 + \frac{1-\da\db}{1-\db} R_2 \leq (1-\da\db), & \\
0 \leq  \frac{1-\da\db}{1-\da} R_1 + R_2 \leq (1-\da\db). & 
\end{array} \right.
\end{equation}

\noindent \underline{\bf Symmetric dynamic association:} The transmitter may instead decide to a dynamic association. Fix $\eta_1 = \eta_2 = \eta$. Then, $S_1[t] \sim \mathcal{B}(1-\da)$ and $S_2[t] \sim \mathcal{B}(1-\db)$ for $t=1,2,\ldots, \eta n$; during the following $\eta n$ channel uses,  $\msf{Rx}_2$ will have a better channel quality; and during the remaining $(1 - 2\eta) n$ channel uses, both channels are governed by $\mathcal{B}(1-\ds$) processes.

\noindent \underline{\bf Transmission protocol:} 
Our opportunistic protocol is a modification of the one presented in~\cite{GatzianasGeorgiadis_13}, which we outlined above.
Fix $\eta = 1/2.12$. 
The goal is to successfully deliver $m = 0.76/2.12 n$ packets to each receiver terminal, which would result in a sum-rate of $2m/n = 0.76/1.06 \approx 0.717$. 
The communication block is divided into three phases.

\noindent {\bf Transmission in phase $1$:} At each time instant, the transmitter will send out a packet intended for $\msf{Rx}_1$ until at least one terminal obtains it. If the packet arrives at $\msf{Rx}_1$, no further action is required. If the packet is delivered $\msf{Rx}_2$ but not $\msf{Rx}_1$, the transmitter will keep this packet in a virtual queue. The communication will last an average of\footnote{Average values are used for convenience. A careful analysis using concentration theorems~\cite{vahid2021erasure} would ensure the asymptotic rate when $n \rightarrow \infty$, matches the one obtained by using the average values.}:
\begin{align}
\frac{1}{1-\da\db}m = \eta n.
\end{align}
Further, there will be on average:
\begin{align}
\frac{\da(1-\db)}{1-\da\db}m 
\end{align}
packets in the virtual queue.

\noindent {\bf Transmission in phase $2$:} Similar to phase $1$ but with interchanging IDs.

\noindent {\bf Transmission in phase $3$:} The transmitter will first create the pairwise summations of the packets in the two virtual queues. 
Then, the resulting sequence will be sent to the terminals at an erasure code of rate $(1-\dm)$. The transmission in this mode will on average take:
\begin{align}
\frac{\da(1-\db)}{(1-\dm)(1-\da\db)}m = (1-2\eta)n.
\end{align}

\noindent {\bf Achievable region:} The transmission described above achieves the following region: %{\color{blue} revise to be numbers}
\begin{equation}
\label{Eq:AchBalanced}
\mathcal{R}_{\msf{dynamic}} \equiv \left\{ \begin{array}{ll}
0 \leq R_1 + \frac{1-\da\db}{1-\db} R_2 \leq \frac{1-\da\db}{1-\db}(1-\dbar), & \\
0 \leq  \frac{1-\da\db}{1-\db} R_1 + R_2 \leq \frac{1-\da\db}{1-\db}(1-\dbar), & 
\end{array} \right.
\end{equation}
where $\dbar = \eta (\da + \db) + (1-2\eta) \dm$.

\noindent \underline{\bf Comparing different scenarios:} Figure~\ref{Fig:PlanComparison} depicts the achievable region for the four scenarios we studied above.
As we can see, the dynamic RIS-user association enables the highest overall sum-rate.
The discussion above brings upon two questions: 
(i) Is the region in \eqref{Eq:AchBalanced} the best achievable region for the given RIS-user association? and (ii) Is a better RIS-user association possible?

Essentially, we have already answered the first question. Comparing the region described in \eqref{Eq:Theorem} and the one in \eqref{Eq:AchBalanced}, we observe that $\mathcal{R}_{\msf{dynamic}} \equiv \mathcal{C}^{\msf{out}}$,
meaning that for the specific parameters given in this section and the symmetric dynamic association, $\mathcal{C}_{\mathcal{A}} \equiv \mathcal{C}^{\msf{out}}$ from \eqref{Eq:Theorem} and the achievable region described in \eqref{Eq:AchBalanced} is indeed optimal. On the other hand, we cannot yet answer the second question.
There are multiple challenges that would be the subject of future work. 
First, the capacity region in general remains unknown and while the region described in \eqref{Eq:AchBalanced} is in fact the capacity region, we do not have the answer for all possible values of the parameters.
Second, the dynamic RIS-user association has a specific format (\emph{e.g.}, each RIS-user association last a linear function of the blocklength).
In other words, even if we characterize the capacity region of the dynamic association in general, it is not sufficient to fully answer the second question, suggesting stronger outer-bounds may be needed.

%%%%%%%%%%%%%%%%%%%%%%%%%%%%%%%%%%%%%%%%%%%%%%%%%%%
%%%%%%%%%%%%%%%%%%%%%%%%%%%%%%%%%%%%%%%%%%%%%%%%%%%
\section{Proof of Theorem~\ref{THM:RIS-BEC}}
\label{Section:Proofs}

In this section, we provide the proof of Theorem~\ref{THM:RIS-BEC}. 
The proof has some similarities to the one presented in recent results on broadcast channels such as~\cite{vahid2021erasure}, but with one main difference.
The RIS-user association, $\si$ is decided prior to the beginning of communications and as such, the statistical changes in the channel are known non-causally.
Thus, we focus more on the new aspects of the proof.
Moreover, we assume $\dm \leq \frac{\db(1-\da)}{1-\db}$, thus based on \eqref{Eq:beta}, $\beta = \frac{1-\da\db}{1-\db}$. 
The other scenario can be obtained similarly.
Finally, we will derive the first inequality in \eqref{Eq:Theorem} and the proof of the second one would be very similar.

For the problem defined in Section~\ref{Section:Setup}, fix $\si$ and suppose rate-tuple $\lp R_1, R_2 \rp$ is achievable. 
We have:
\begin{align}
&n \left( R_1 + \beta R_2 \right) = H(W_1) + \beta H(W_2) \nonumber \\
& \overset{(a)}= H(W_1|W_2, S^n, \si) + \beta H(W_2| S^n, \si) \nonumber \\
& \overset{(\mathrm{Fano})}\leq I(W_1;Y_1^n|W_2, S^n, \si) + \beta I(W_2;Y_2^n|S^n,\si) + n \upxi_n \nonumber \\
& = H(Y_1^n|W_2, S^n, \si) - \underbrace{H(Y_1^n|W_1,W_2,S^n,\si)}_{=~0} \nonumber \\
& \quad + \beta H(Y_2^n|S^n,\si) - \beta H(Y_2^n|W_2,S^n,\si) + n \upxi_n \nonumber \\
& \overset{(b)}\leq \beta H(Y_2^n|S^n,\si) + n \upxi_n \nonumber \\
& \overset{(c)}\leq n \beta \lp 1 - \eta_1 \dn - \eta_2 \dd - (1-\eta_1-\eta_2) \ds \rp + \upxi_n \nonumber \\
& \overset{(d)}\leq n \beta \left( 1 - \dbar_2 \right) + \upxi_n,
\end{align}
where from Fano's inequality, $\upxi_n \rightarrow 0$ as $n \rightarrow \infty$; $(a)$ follows from the independence of messages; $(b)$ comes from applying Lemma~\ref{Lemma:Leakage_BIC_No} below; $(c)$ is true since the entropy of a binary random variable is at most $1$ (or $\log_2(q)$ for packets in $\mathbb{F}_q$) and the channel statistics is governed by the dynamic RIS-user association. Dividing both sides by $n$ and let $n \rightarrow \infty$, we obtain the first inequality in \eqref{Eq:Theorem}.

\begin{lemma}
\label{Lemma:Leakage_BIC_No}
For the two-user double RIS-aided broadcast packet network as described in Section~\ref{Section:Setup} and for any encoding function satisfying~\eqref{eq_enc_function}, we have:
\begin{align}
\label{eq:lemma}
H\left( Y_1^n | W_2, S^n, \si \right) - \beta  H\left( Y_2^n | W_2, S^n, \si \right) \leq 0,
\end{align}
 
%where $\bm$ is given in \eqref{Eq:BetaMax}.
\end{lemma}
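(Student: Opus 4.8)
The plan is to sandwich the two entropies so that both collapse to the same single-letter quantity, after which the lemma reduces to a per-symbol numerical inequality that I can check phase by phase. First I would upper bound the left-hand entropy by the joint entropy, which dominates it:
\[
H(Y_1^n\mid W_2, S^n, \si) \le H(Y_1^n, Y_2^n \mid W_2, S^n, \si),
\]
since $H(Y_1^n,Y_2^n\mid\cdot)=H(Y_1^n\mid\cdot)+H(Y_2^n\mid Y_1^n,\cdot)\ge H(Y_1^n\mid\cdot)$. Symmetrically, I would lower bound $H(Y_2^n\mid W_2,S^n,\si)=\sum_t H(Y_2[t]\mid Y_2^{t-1},W_2,S^n,\si)$ by inserting $Y_1^{t-1}$ into each conditioning, which can only decrease the individual terms. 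The purpose of both moves is that every resulting term is now conditioned on the common history $\Phi_{t-1}:=(Y_1^{t-1},Y_2^{t-1},W_2,S^n,\si)$, so the two sides become comparable term by term.

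Next I would establish the single-letter identities. Writing $g_t := H(X[t]\mid \Phi_{t-1})$, the erasure structure $Y_i[t]=S_i[t]X[t]$ together with the causal encoder constraint \eqref{eq_enc_function} ensures that $X[t]$ does not depend on the current erasures $(S_1[t],S_2[t])$: the encoder only observes $S^{t-1}$, while $S^n$ (hence the time-$t$ erasures) is already frozen inside $\Phi_{t-1}$. Thus, conditioned on $\Phi_{t-1}$, the pair $(Y_1[t],Y_2[t])$ reveals $X[t]$ exactly on the event that at least one link is unerased and is the all-zero symbol otherwise; because erasures are independent across the two users, this event has probability $1-\delta_1[t]\,\delta_2[t]$, giving
\[
H(Y_1[t],Y_2[t]\mid\Phi_{t-1}) = \bigl(1-\delta_1[t]\,\delta_2[t]\bigr)\,g_t, \qquad H(Y_2[t]\mid\Phi_{t-1}) = \bigl(1-\delta_2[t]\bigr)\,g_t,
\]
where $\delta_i[t]$ is the time-$t$ erasure probability of user $i$ fixed (deterministically) by the dynamic association. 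Summing over $t$ and combining with the two sandwich bounds yields $H(Y_1^n\mid W_2,S^n,\si)\le\sum_t(1-\delta_1[t]\delta_2[t])\,g_t$ and $H(Y_2^n\mid W_2,S^n,\si)\ge\sum_t(1-\delta_2[t])\,g_t$, with the \emph{same} nonnegative weights $g_t$.

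It then suffices to verify, term by term, that $1-\delta_1[t]\delta_2[t]\le\beta\,(1-\delta_2[t])$, which I would check on the three phases: phase~$1$ has $(\delta_1[t],\delta_2[t])=(\da,\db)$ and the inequality is an equality by the definition $\beta=\frac{1-\da\db}{1-\db}$; phase~$2$ has $(\db,\da)$ and it reduces to $\frac{1-\da\db}{1-\da}\le\beta$, which holds since $\da<\db$; phase~$3$ has $(\dm,\dm)$ and it reduces to $1+\dm\le\beta$, i.e. $\dm\le\frac{\db(1-\da)}{1-\db}$, which is exactly the standing regime assumption. Multiplying each inequality by $g_t\ge 0$ and summing gives $\sum_t(1-\delta_1[t]\delta_2[t])\,g_t\le\beta\sum_t(1-\delta_2[t])\,g_t$, and chaining the two sandwich bounds proves \eqref{eq:lemma}. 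I expect the only delicate point to be the clean justification of the single-letter identities under the non-causally fixed association --- namely that each $\delta_i[t]$ is a deterministic constant per phase and that the causal encoder makes $X[t]$ independent of the time-$t$ erasures given $\Phi_{t-1}$; once this is in place the remainder is bookkeeping. The complementary regime $\beta=1+\dm$ follows by the identical argument, with the phase inequalities playing interchanged roles.
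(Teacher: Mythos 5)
Your proof is correct and takes essentially the same route as the paper's: both single-letterize via the chain rule with the common history $(Y_1^{t-1},Y_2^{t-1},W_2,S^n,\si)$, exploit the erasure identities $H(Y_1[t],Y_2[t]\mid\cdot)=(1-\delta_1[t]\delta_2[t])H(X[t]\mid\cdot)$ and $H(Y_2[t]\mid\cdot)=(1-\delta_2[t])H(X[t]\mid\cdot)$ (valid because the causal encoder makes $X[t]$ independent of the time-$t$ erasures), and reduce to the identical three per-phase scalar comparisons --- equality in phase $1$ by the definition of $\beta$, phase $2$ via $\da<\db$, and phase $3$ via the standing assumption $\dm\leq \frac{\db(1-\da)}{1-\db}$. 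The only difference is organizational: you sandwich both entropies around the common weights $g_t$ and compare term by term, whereas the paper chains the inequalities in a single pass from $H(Y_2^n\mid W_2,S^n,\si)$ down to $\frac{1}{\beta}H(Y_1^n\mid W_2,S^n,\si)$.
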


%HERE

\begin{proof}
We have:
%{\small
\begin{align}
&H\left( Y_2^n | W_2, S^n, \si \right) \nonumber \\
&\overset{(a)}= \sum_{t=1}^{\eta_1 n}{(1-\dn) H\left( X[t] | Y_2^{t-1},W_2, S^{n}, \si \right)} \nonumber \\
&+ \sum_{t=\eta_1 n+1}^{(\eta_1+\eta_2)n}{(1-\da) H\left( X[t] | Y_2^{t-1},W_2, S^{n}, \si \right)} \nonumber \\
&+ \sum_{t=(\eta_1+\eta_2) n+1}^{n}{(1-\dm) H\left( X[t] | Y_2^{t-1},W_2, S^{n}, \si \right)} \nonumber \\
&\geq \sum_{t=1}^{(\eta_1+\eta_2) n}{\frac{1}{\beta} H\left( Y_1[t], Y_2[t] | Y_1^{t-1}, Y_2^{t-1},W_2, S^{n}, \si \right)} \nonumber \\
&+ \sum_{t=(\eta_1+\eta_2) n+1}^{n}{\frac{H\left( Y_1[t], Y_2[t] | Y_1^{t-1}, Y_2^{t-1},W_2, S^{n}, \si \right)}{(1+\dm)}} \nonumber \\
&\overset{(b)}\geq \frac{1}{\beta} H\left( Y_1^n, Y_2^n | W_2, S^{n}, \si \right) \nonumber \\
&\overset{(c)}\geq \frac{1}{\beta} H\left( Y_1^n | W_2, S^{n}, \si \right),
\end{align}
%}
where $(a)$ holds since $X[t]$ is independent of $S[t]$; $(b)$ holds since the omitted term is the product of a discrete entropy term with: 
\begin{align}
    \frac{1}{1+\dm} - \frac{1-\db}{1-\da\db},
\end{align}
which are both non-negative given that we assumed $\dm \leq \frac{\db(1-\da)}{1-\db}$; and $(c)$ follows from the non-negativity of the discrete entropy function.
\end{proof}

%%%%%%%%%%%%%%%%%%%%%%%%%%%%%%%%%%%%%%%%%%%%%%%%%%%
%%%%%%%%%%%%%%%%%%%%%%%%%%%%%%%%%%%%%%%%%%%%%%%%%%%

%%%%%%%%%%%%%%%%%%%%%%%%%%%%%%%%%%%%%%%%%%%%%%%%%%%
%%%%%%%%%%%%%%%%%%%%%%%%%%%%%%%%%%%%%%%%%%%%%%%%%%%
\section{Conclusion}
\label{Section:Conclusion}

We provided an information-theoretic approach to RIS-user association in the context of a two-user double RIS-aided broadcast packet network with goal of capacity maximization. 
We showed how a dynamic RIS-user association could enhance channel capacity.
We presented a new set of outer-bounds and an opportunistic linear protocol that achieves these bounds under non-trivial conditions.
We then discussed potential venues for further investigations.
In particular, for the dynamic association defined in this work, the capacity region still remains unknown and needs additional research.
Once that is answered, the best possible association may still take a different form.
Another potential direction is the extension of the results to interfering erasure channels~\cite{AlirezaBFICDelayed}, which is relevant to vehicular networks with multiple roadside units. 
% ,vahid2021topological

%%%%%%%%%%%%%%%%%%%%%%%%%%%%%%%%%%%%%%%%%%%%%%%%%%%

%\newpage

\bibliographystyle{ieeetr}
\bibliography{bib_FBBudget}

\end{document}